\documentclass[12pt]{article}

\usepackage[margin=1in]{geometry}

\usepackage{setspace}
\usepackage{setspace,diagbox}
\usepackage{amssymb, lipsum}
\usepackage{amsmath}
\usepackage{amsthm}
\usepackage{array}
\usepackage{enumitem}
\usepackage{nicefrac}
\usepackage{esint}
\usepackage{float}
\usepackage{physics}
\usepackage{bm}
\usepackage{bbm}
\usepackage[colorlinks]{hyperref}
\usepackage{tcolorbox}
\usepackage{graphicx}
\usepackage{subcaption}
\usepackage{tabularx}
\usepackage{soul}
\usepackage[export]{adjustbox}

\usepackage[parfill]{parskip}
\usepackage{natbib}
\usepackage{authblk}

\usepackage{tikz}
\usetikzlibrary{shapes.geometric, arrows.meta, positioning}

\usepackage{algorithm}
\usepackage[noend]{algpseudocode}
\usepackage{booktabs}

\definecolor{webgreen}{rgb}{0,0.4,0}
\definecolor{webbrown}{rgb}{0.6,0,0}
\definecolor{purple}{rgb}{0.5,0,0.25}
\definecolor{darkblue}{rgb}{0,0,0.7}
\definecolor{darkred}{rgb}{0.7,0,0}
\definecolor{darkgreen}{rgb}{0,0.7,0}

\hypersetup{colorlinks,citecolor=darkred,filecolor=black,linkcolor=darkblue,urlcolor=webgreen}
\newtheoremstyle{definition_style}%
  {}
  {}
  {\normalfont}
  {}
  {\bfseries}
  {.}
  {.5em}
   {\thmname{\bfseries#1}\thmnumber{ \bfseries#2}\thmnote{ (\bfseries#3)}}
  \theoremstyle{definition_style}

\newtheorem{theorem}{Theorem}
\newtheorem{axiom}{Axiom}

\newtheorem{definition}{Definition}
\newtheorem{lemma}{Lemma}
\newtheorem{proposition}{Proposition}

\newtheorem{observation}{Observation}

\newtheorem{example}{Example}

\newtheoremstyle{normaltext}
  {2pt}   
  {2pt}   
  {}      
  {}      
  {\bfseries} 
  {.}     
  { }     
  {}      

\theoremstyle{normaltext}

\title{Stable Matching with Choice Correspondences: Beyond Path Independence}

\author{Varun Bansal\footnote{Economics and Planning Unit, Indian Statistical Institute, Delhi Center, 7, S. J. S. Sansanwal Marg, New Delhi 110016. Email: varun23r@isid.ac.in }\, Mihir Bhattacharya\footnote{Department of Economics, Ashoka University, Rajiv Gandhi Education City, Rai, Sonipat,  Haryana, 131029, India. Email: mihir.bhattacharya@ashoka.edu.in} \,
Ojasvi Khare\footnote{Shiv Nadar University, Department of Economics, Gautam Buddha Nagar, Uttar Pradesh, 201314. Email: ojasvi.khare@snu.edu.in} }

\date{}

\begin{document}

\maketitle                                     

\begin{abstract}

We study stable matching with agents' choice correspondences as primitives. For many-to-many markets we introduce Individually Rational Persistence (IRP), which admits status-quo persistence, and show that substitutability (SUB) and IRP guarantee a stable matching; the constructive proof yields a symmetric matching algorithm with no proposing side and long-term contracts. For one-to-one markets we introduce replacement-based stability and show that SUB together with binary acyclicity (BA) guarantees existence. Combining the two, in the many-to-one setting, where SUB and irrelevance of rejected contracts (IRC) guarantee stability, we show that SUB with IRP on firms and BA on workers guarantees existence. We show that, in the presence of SUB, IRC is not necessary for existence in any of the three models, so stable matchings exist outside the domain of path independence.

\paragraph{JEL classification:} C62, D01, D47
\paragraph{Keywords:} choice correspondences, substitutability, individually rational persistence, many-to-many matching, many-to-one matching, one-to-one matching, stability, binary acyclicity.
\end{abstract}
 
\section{Introduction}

The matching literature typically takes agents' preferences to be complete, transitive orderings over outcomes, from which stable matchings are derived. This framework is useful, but its informational and behavioral demands are strong: preferences may be unknown, and observed choices may display bounded rationality or other non-standard choice behavior.\footnote{Such choices may fail to be representable by a complete and transitive binary relation. A binary relation $R$ over $X$ is (i) complete if, for all $x,y\in X$, either $xRy$ or $yRx$; and (ii) transitive if, for all $x,y,z\in X$, $[xRy \text{ and } yRz] \implies xRz$.} Instead, we model agents directly through choice correspondences, taking as primitive the observable choices agents make from feasible sets of contracts. This admits behavior that no strict ranking captures, such as status-quo persistence: an agent retains a previously accepted contract even if a superior contract arrives later, as in  employment-protection institutions, seniority, contractual commitments, or relationships.\footnote{Status-quo persistence is a well-documented departure from preference-maximizing choice,
first identified experimentally by \cite{samuelson1988status} and since studied extensively (\cite{MasatliogluOk2005} and \cite{ApesteguiaBallester2013}).}

With agents represented by choice correspondences, stability is formulated in terms of choices rather than preferences. Following \cite{chambers2017choice}, a matching is stable if each agent chooses its assigned contracts from the matching (individual rationality) and no set of contracts can be added such that every agent involved would choose those contracts alongside its current holdings (no blocking).\footnote{A set is individually rational if the agent's choice correspondence returns the entire set when it is offered.} They show that path independence (PI), equivalent to substitutability (SUB) together with irrelevance of rejected contracts (IRC), is sufficient for the existence of a CY-stable many-to-many matching.\footnote{PI requires choices to be consistent across decompositions of a menu: choosing from each part and then from the union of the resulting sets yields the same outcome as choosing from the whole menu.} We retain SUB but replace IRC with Individually Rational Persistence (IRP), and show that SUB and IRP together suffice for existence. Thus, our approach provides an alternative to that of \cite{chambers2017choice}.\footnote{We show in the Appendix that IRC and IRP are logically independent (Proposition \ref{prop:irc-irp}).}

IRP requires that adding a single contract to an individually rational set cannot displace any of its existing members if the new contract is chosen alongside the set. However if the new contract is not chosen, IRP imposes no restriction. Thus, IRP protects established relationships without requiring the cross-menu consistency imposed by IRC, making it natural for markets organized around long-term contracts. IRP thus formalizes status-quo persistence without the cross-menu
consistency IRC imposes (Example~\ref{ex:statusquo}). This is more than an alternative axiomatization: since IRP is logically
independent of IRC (Proposition~\ref{prop:irc-irp}), SUB and IRP guarantee
stability with choice behavior that IRC and the
preference-based approach rule out.\footnote{Every path-independent
correspondence is rationalizable by a ranking over sets
\cite{yang2020rationalizable}; those satisfying SUB and IRP need not be
(Appendix~\ref{app:comparison}). This connect our 
domain to the literature on boundedly rational and menu-dependent choice
\citep{RubinsteinSalant2006, Gerasimou2018}.}

In many matching environments, an established relationship carries value that a
nominally superior outside option may not. Hiring, doctor--hospital assignment,
research collaborations, and long-term supply agreements all involve
relationship-specific investment, trust, and adaptation, so that once a
relationship is formed, its realized value can exceed the prospective value of a
newly available alternative; switching is then costly, and incumbents are
protected.\footnote{This is the reduced-form implication for observed choice of
relationship-specific investment and switching costs \cite{klein1978vertical,
williamson1985economic, klemperer1995competition}, and of seniority and
implicit-contract arrangements in labor markets \cite{baily1974wages,
azariadis1975implicit}.} Academic tenure is the limiting case: once granted, an
appointment is not revoked when a stronger candidate later appears. IRP captures
this incumbency as a static, cross-menu restriction on choice; applied in our
algorithm's processing order, it yields the temporal property in which an accepted
contract survives a later arrival. Although this behavior lies outside the
preference-based domain, it remains compatible with stable matching.

We study all three matching formats, each with its own notion of stability:
CY-stability for many-to-many \cite{chambers2017choice}, R-stability for
one-to-one, and MO-stability for the many-to-one setting of
\cite{hatfield2005matching}. In each case we retain SUB and replace IRC---by IRP
where an agent holds many contracts, by BA where it holds one; the many-to-one
market has both kinds of side, so its result combines IRP on firms with BA on
workers.

We provide the Symmetric Matching Algorithm (SMA) for many-to-many matching
under choice correspondences. It processes contracts one at a time, accepting a
contract if and only if both of its endpoints choose it alongside their current
holdings and rejecting it permanently otherwise. Unlike the Deferred Acceptance
Algorithm, SMA has no proposing and responding sides: firms and workers are
treated symmetrically, and there is no recall or re-proposing, which makes it
operationally simpler. Under SUB and IRP, it terminates in a CY-stable matching.

For one-to-one matching, we introduce a replacement-based notion of stability
together with a binary acyclicity (BA) condition on pairwise choices, and show
that SUB and BA suffice for the existence of a stable matching. BA constrains
only strict pairwise choice and, under SUB, is strictly weaker than IRC: every
path-independent correspondence satisfies BA but not conversely, so existence
holds on a strictly larger domain than PI.

Our third result unifies the two. In many-to-one markets, firms hold many
contracts while each worker holds at most one, so the two sides play
structurally different roles: an added contract enlarges a firm's holding,
whereas a worker can only replace her current match. We introduce MO-stability,
combining set-inclusion blocking on firms with replacement blocking on workers,
and show that SUB with IRP on firms and BA on workers guarantees existence. This
is the setting of \cite{hatfield2005matching}, where \cite{aygun2013matching}
show that SUB alone does not suffice and that IRC is implicitly assumed. IRC is
not necessary even here: IRP is independent of IRC, and BA is weaker than IRC
under SUB, so stability again lies outside PI.

\subsection{Literature Review}
\cite{hatfield2005matching} introduce the matching-with-contracts framework and show that substitutability (SUB) of agents' choice functions guarantees the existence of a stable allocation in many-to-one markets, building on the substitutes condition of \cite{kelsocrawford1982} and the revealed-preference approach of \cite{roth1984} and \cite{blair1988lattice}. \cite{aygun2013matching} subsequently show that this existence result implicitly relies on a second condition, irrelevance of rejected contracts (IRC), under which removing a rejected contract from a menu never changes the chosen set; without IRC, SUB alone does not guarantee existence even there. Together, SUB and IRC are equivalent to PI (\cite{aizerman1981general}), which has since become the standard sufficient condition for stable matching with contracts (\cite{hatfield2010substitutes, hatfield2017contract}).

A large literature weakens the conditions behind PI. On the substitutability side, \cite{hatfield2010substitutes} restrict the scope of complementarities through unilateral substitutability, and \cite{hatfield2017contract} introduce completability, requiring that a choice correspondence admit an extension to a fully substitutable choice structure; the latter studies many-to-many matching with contracts under SUB and IRC, focusing on contract design and the structure of the stable set rather than on alternatives to IRC. A related strand weakens substitutability itself: \cite{zhang2016} introduces observable substitutability, strengthened by \cite{bandohiraizhang2021substitutes} to weak observable substitutability across doctors (WOSAD), while \cite{kadam2017} shows that unilateral substitutability suffices for a substitutable completion to exist. \cite{hatfieldkominerswestkamp2021} obtain stability and strategy-proofness in many-to-one markets under observable substitutability and size monotonicity.

\textbf{Relation to dynamic matching.} Our algorithm is sequential rather than dynamic in the usual sense: agents' holdings evolve over a sequence of rounds, and IRP governs how already-accepted contracts are protected against later arrivals. This connects the paper to a strand of the matching literature concerned with sequential decision-making under commitment. \cite{kurino2014} and \cite{blochcantala2017} study dynamic assignment problems in which agents arrive over time and mechanisms must decide, period by period, whether to match immediately or delay; \cite{doval2022} develops a notion of dynamic stability for markets in which agents enter one at a time and matches once formed cannot be costlessly undone. Our model also differs from the literature on dynamic matching under stochastic arrivals and departures (e.g., \cite{baccaraleeyariv2020}; \cite{akbarpourligharan2020}), where the central question is a timing trade-off between matching now and waiting for a better future partner under discounting or scarcity. IRP's permanence property is closer in spirit to the commitment-without-recall concerns of the former branch, even though our primitives are static.

IRP is motivated by status-quo persistence, one of several documented departures from preference-maximizing choice.\footnote{See \cite{HuberPaynePuto1982}, \cite{MasatliogluOk2005}, \cite{RubinsteinSalant2006}, \cite{ApesteguiaBallester2013}, \cite{SalantSiegel2018}, and \cite{Gerasimou2018} on the attraction effect, status-quo bias, and choice overload.} Because this pattern lies outside path independence, a recent literature asks how far the PI assumption underlying \cite{chambers2017choice} can be relaxed while retaining stable matchings. In the one-to-one setting, \cite{CaspariKhanna2025} characterize stable mechanisms using binary menu conditions weaker than PI; our binary acyclicity condition belongs to this family, though we obtain replacement stability by direct construction rather than through incentive-compatible design. \cite{Kuvalekar2022} treats one-to-one matching with incomplete preferences under a weak stability notion.

Closest to our many-to-many result is \cite{BandoHiraiImamura2025}, who also weaken IRC while retaining SUB for existence in this setting. Their condition replaces IRC by a monotonicity requirement, whereas ours, IRP, is a persistence requirement on individually rational sets; the two are logically independent (Appendix~\ref{app:comparison}), so neither result subsumes the other. Our approach differs further in method and scope: our existence proof is constructive, via a symmetric matching algorithm with no proposing side, and we extend the analysis to one-to-one and many-to-one markets, obtaining existence in each setting.

\textbf{Consistency-type axioms on choice correspondences.} A related literature studies conditions restricting how choices vary across nested menus. \cite{alkan2002class} develops a many-to-many matching model under PI, and \cite{EcheniqueYenmez2015} introduce the closely related irrelevance of rejected students (IRS), a school-choice analogue of IRC. These conditions require invariance when rejected alternatives are removed from a menu; IRP instead requires persistence when a menu expands. This drives the monotone holdings, permanent acceptance, and the symmetry of our algorithm.

The paper is organized as follows. Section \ref{sec2} introduces the general matching model, while Section \ref{sec3} presents the axioms on choice correspondences used in the many-to-many model, and provides the main result on stability along with the matching algorithm.  Section \ref{sec4} studies the one-to-one matching problem and establishes existence results with a constructive algorithm, using a binary acyclicity condition.
Section \ref{sec5} combines the two conditions to treat the many-to-one problem. All proofs omitted from the main text appear in the Appendix. 

\section{Model}\label{sec2}
There is a finite set of workers $W$, firms $F$, and contracts $X$. Each contract $x \in X$ identifies a unique firm--worker pair together with a comprehensive set of employment terms, which may include wages and other contractual benefits. For each contract $x \in X$, the firm and worker associated with the contract are denoted by $f(x)$ and $w(x)$, respectively.\footnote{We allow a bilateral contract to be formed between a worker and a firm, but we do not require it to be unique: it is possible to have $x \neq x'$ with $f(x)=f(x')$ and $w(x)=w(x')$.} 

For each worker $w$, let $X_w = \{x \in X : w(x) = w\}$ be the contracts associated with $w$, and for each firm $f$, let $X_f = \{x \in X : f(x) = f\}$ be the contracts associated with $f$. For an agent $i \in W \cup F$, we write $X_i$ for $X_w$ if $i=w$ and for $X_f$ if $i=f$.

Each agent $i$ has a choice correspondence
$C_i : 2^{X_i} \to 2^{X_i}$ with $C_i(S) \subseteq S$ for all $S \subseteq X_i$;
choices may be empty. We assume an agent's choice depends only on the contracts
available to them: for any $S \subseteq X$, $C_i(S) = C_i(S \cap X_i)$. A matching market is a tuple $\langle W, F, X, (C_w)_{w \in W}, (C_f)_{f \in F} \rangle$.

\begin{definition}[Matching]\label{def:matching}
A \emph{matching} $\nu \subseteq X$ is a set of contracts. For each worker $w \in W$, let $\nu_w = \{x \in \nu : w(x) = w\}$ denote the contracts assigned to $w$ under $\nu$; for each firm $f \in F$, let $\nu_f = \{x \in \nu : f(x) = f\}$ denote the contracts assigned to $f$.
\end{definition}

A matching is many-to-many by default: any set of firms and workers can be jointly matched through any set of contracts, with no capacity constraint beyond the market itself.

\section{The many-to-many matching problem}\label{sec3}
We use the notion of stability introduced by \cite{chambers2017choice}, which we refer to throughout as CY-stability.

\begin{definition}[CY-stability]
A matching $\nu$ is \emph{individually rational (IR)} if $C_i(\nu_i) = \nu_i$ for every agent $i$. A non-empty set $Z \subseteq X \setminus \nu$ \emph{blocks} $\nu$ if, for every $i$ with $Z_i := Z \cap X_i \ne \emptyset$, $Z_i \subseteq C_i(\nu_i \cup Z_i)$. A matching is \emph{CY-stable} if it is IR and admits no blocking set.
\end{definition}

A matching $\nu$ is individually rational (IR) if every agent chooses all contracts assigned to her from her current holdings. It is blocked by a non-empty set of new contracts $Z$, disjoint from $\nu$, if every agent involved in $Z$, when offered her current holdings $\nu_i$ together with the new contracts $Z_i$, chooses all contracts in $Z_i$. CY-stability rules out such blocking sets: no collection of unmatched contracts can be jointly chosen by all agents it involves in addition to their current holdings.

We provide the axioms for choice correspondences. The first one is a consistency requirement which allows for substitutability between workers (or firms).

\begin{axiom}[Substitutability (SUB)] A choice correspondence $C_{i}$ satisfies \text{SUB} if for every $A,B\subseteq X_{i}$ such that  $A \subseteq B$, then $C_{i}(B)\cap A\subseteq C_{i}(A)$. \end{axiom}

Substitutability (SUB) requires contracts to behave as gross substitutes: if an agent chooses a contract from a larger set of available contracts, the contract remains chosen from any smaller set that contains it. This is the substitutes condition underlying the classical theory of matching with contracts (e.g., \cite{kelsocrawford1982}, \cite{hatfield2005matching}). It rules out complementarities between contracts, for example a firm cannot choose worker $A$ only when it can also retain worker $B$, and then drop $A$ when $B$ becomes unavailable.\footnote{\cite{kelsocrawford1982} introduced the gross substitutes condition in a job matching model with wages and preferences over sets of contracts to establish the existence of stable outcomes. \cite{roth1984} showed that stable many-to-one matchings exist under responsive preferences, which imply SUB. Subsequent work has adopted SUB directly as a condition on choice functions to guarantee stability, including \cite{hatfield2005matching}, \cite{hatfield2010substitutes}, \cite{chambers2017choice}, and \cite{pycia2023matching}. An alternative formulation of the axiom is as follows: a choice correspondence $C:2^{X}\to 2^{X}$ satisfies SUB if, for every $A,B\subseteq X$ such that for every $a\in A\subseteq B$, $a\in C(B)$ implies $a\in C(A)$.} The following observations are immediate.

\begin{observation}\label{obs:subcc}
Suppose $C_i$ satisfies SUB. Then:
\begin{enumerate}[label=(\roman*)]
\item If $x\in A\subseteq B$ and $x\notin C_i(A)$, then $x\notin C_i(B)$.
\item $C_i(S)$ is IR for every $S\subseteq X_i$; that is, $C_i(C_i(S))=C_i(S)$.
\end{enumerate}
\end{observation}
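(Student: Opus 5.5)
Both parts follow directly from the definition of SUB, applied to well-chosen nested menus; no earlier result beyond the axiom is needed.

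For part (i), I would argue by contraposition. Fix $x\in A\subseteq B$ and suppose $x\in C_i(B)$. Since $x\in A$, we get $x\in C_i(B)\cap A$. SUB gives $C_i(B)\cap A\subseteq C_i(A)$, so $x\in C_i(A)$. This contradicts $x\notin C_i(A)$, which proves (i).

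For part (ii), I would prove the two inclusions separately. Fix $S\subseteq X_i$ and set $A:=C_i(S)$. Because $C_i(S)\subseteq S$ by the definition of a choice correspondence, we have $A\subseteq S$. The inclusion $C_i(A)\subseteq A$ holds for the same definitional reason. For the reverse inclusion, apply SUB with this $A$ and with $B:=S$:
\[
C_i(S)\cap C_i(S)\subseteq C_i(C_i(S)),
\]
that is, $C_i(S)\subseteq C_i(C_i(S))$. The two inclusions together give $C_i(C_i(S))=C_i(S)$. By the definition of individual rationality for a set, this says exactly that $C_i(S)$ is IR.

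There is no real obstacle here. The only point needing care is to note that the empty case is harmless: if $C_i(S)=\emptyset$, then $C_i(\emptyset)\subseteq\emptyset$ forces $C_i(\emptyset)=\emptyset$, so the equality still holds.
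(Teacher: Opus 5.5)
Your proposal is correct and matches the paper, which states both parts as immediate consequences of SUB without a written proof: (i) follows from $x\in C_i(B)\cap A\subseteq C_i(A)$, and (ii) follows from SUB with $A=C_i(S)\subseteq B=S$ together with $C_i(A)\subseteq A$, exactly as you argue. Your separate remark on the empty case is harmless but unnecessary, since the general argument already covers it.
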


We now define the notion of a challenger and introduce our new axiom.

\begin{definition}[Challenger]
For an IR set $A$ and $x\in X_i\setminus A$, $x$ is a \emph{challenger} to $A$ if $x\in C_i(A\cup\{x\})$.
\end{definition}

\begin{axiom}[Individually Rational Persistence, IRP]\label{ax:IRP}
A choice correspondence $C_i$ satisfies IRP if, for every IR set $S\subseteq X_i$ and every challenger $x\notin S$ to $S$,
\[
S\subseteq C_i(S\cup\{x\}).
\]
\end{axiom}

IRP requires that an IR set cannot be partially discarded when a single new contract is added: if the new contract is a challenger, it must be accepted without displacing any existing contract; otherwise IRP imposes no restriction on the choice from the expanded set. The intuition is analogous to tenure: once a professor has been appointed under the department's standard, the appointment is not revoked if a stronger candidate arrives.

The following example exhibits a choice correspondence satisfying IRP for which both SUB and IRC fail.

\begin{example}[IRP without SUB or IRC]
Let $X_i = \{a,b\}$ with $C_i(\{a\}) = \emptyset$, $C_i(\{b\}) = \emptyset$, and $C_i(\{a,b\}) = \{a\}$. Both SUB and IRC fail: taking $A = \{a\} \subseteq B = \{a,b\}$, $C_i(B) \cap A = \{a\} \not\subseteq C_i(A) = \emptyset$, violating SUB; and $C_i(B) = \{a\} \subseteq A$ yet $C_i(A) \ne C_i(B)$, violating IRC. IRP holds vacuously: the only IR set is $\emptyset$, and neither $a$ nor $b$ is a challenger to it, since $a \notin C_i(\{a\})$ and $b \notin C_i(\{b\})$.
\end{example}

The next example provides a choice correspondence that violates IRP.

\begin{example}\label{eg:scope}
Let $X_i = \{x,y\}$ with $C_i(\{x\}) = \{x\}$ and $C_i(\{x,y\}) = \{y\}$. Then $\{x\}$ is IR, and $y$ is a challenger to it since $y \in C_i(\{x,y\})$. IRP would require $\{x\} \subseteq C_i(\{x,y\}) = \{y\}$, which fails since $x \notin \{y\}$. The arrival of $y$ displaces the previously accepted contract $x$ entirely, violating IRP.
\end{example}

The following example provides a correspondence satisfying SUB and IRP that produces the status-quo effect, with the incumbent contract protected by arrival sequence in the matching algorithm (Algorithm \ref{alg:main}).

\begin{example}[IRP and the status-quo effect]\label{ex:statusquo}
Let $X_i=\{b,c\}$ with
\[
C_i(\{b\})=\{b\},\qquad C_i(\{c\})=\{c\},\qquad C_i(\{b,c\})=\varnothing.
\]
This correspondence satisfies SUB and IRP but violates IRC. The two contracts are individually acceptable but cannot be held together. When contracts are processed one at a time (Algorithm~\ref{alg:main}), the first to arrive is accepted and thereafter retained: if $b$ is processed first the agent holds $\{b\}$ and later rejects $c$, and symmetrically if $c$ arrives first. The incumbent is never displaced by the later arrival: a status-quo effect. Note that the correspondence itself is symmetric in $b$ and $c$; the asymmetry is produced by arrival sequence. A rule that instead named a fixed winner, $C_i(\{b,c\})=\{b\}$, would violate IRP, since from the IR set $\{c\}$ the challenger $b$ would displace the incumbent. Because IRC fails, the correspondence is not rationalizable by any ranking over sets (Appendix~\ref{app:comparison})---as it must be, since a fixed preference would select the same contract from $\{b,c\}$ regardless of history. 
\end{example}

\subsection{The Symmetric Matching Algorithm (SMA)}\label{sec:algorithm}

We now describe the Symmetric Matching Algorithm (SMA). It processes the contracts in $X$ one at a time, in a fixed but arbitrary order, and maintains for each agent a set of contracts that agent has accepted so far, which we call the agent's \emph{holding}.  

\textbf{Holdings}: Fix an enumeration $x_1, \dots, x_n$ of $X$, where $n = |X|$; the order is arbitrary. For each agent $i$ and each round $t \in \{0, 1, \dots, n\}$, let $H_i(t) \subseteq X_i$ denote agent $i$'s \emph{holding} after round $t$: the contracts naming $i$ that have been accepted through round $t$. Initialize $H_i(0) := \emptyset$ for every agent $i$.

\textbf{Processing a contract}: At round $t$, the single contract $x_t$ is processed. Write $w := w(x_t)$ and $f := f(x_t)$ for its two endpoints\footnote{By endpoints, we mean the worker and firm associated with the contract.}. The contract is \emph{accepted} if each endpoint would individually choose it when it is added to that endpoint's current holding, i.e., if
\begin{equation}\label{eq:accept}
x_t \in C_w\!\big(H_w(t-1) \cup \{x_t\}\big)
\quad\text{and}\quad
x_t \in C_f\!\big(H_f(t-1) \cup \{x_t\}\big).
\end{equation}
If \eqref{eq:accept} holds, both endpoints update their holdings simultaneously:
\[
H_w(t) := C_w\!\big(H_w(t-1) \cup \{x_t\}\big),
\qquad
H_f(t) := C_f\!\big(H_f(t-1) \cup \{x_t\}\big).
\]
If \eqref{eq:accept} fails, $x_t$ is rejected and both endpoints retain their holdings: $H_w(t) := H_w(t-1)$ and $H_f(t) := H_f(t-1)$. Every agent $j \notin \{w, f\}$ is not an endpoint of $x_t$ and is unaffected: $H_j(t) := H_j(t-1)$. 

The \textbf{terminal matching} is $\nu_i^\ast := H_i(n)$ for every agent $i$; by construction $x \in \nu^\ast$ iff $x \in H_w(n)$ and $x \in H_f(n)$ simultaneously, since the two update only together.

Therefore, each contract is tested individually against the current holdings of its two agents, and the cumulative effect of these choices determines the final matching. We present the formal description of the Symmetric Matching Algorithm (SMA) in Algorithm \ref{alg:main}.

\begin{algorithm}[H]
\caption{Symmetric Matching Algorithm (SMA)} \label{alg:main}
\begin{algorithmic}[1]
\State Fix an enumeration $x_1, \dots, x_n$ of $X$, where $n = |X|$
\State $H_i(0) \gets \emptyset$ for every agent $i$
\For{$t = 1$ \textbf{to} $n$}
  \State $w \gets w(x_t)$, \; $f \gets f(x_t)$
  \If{$x_t \in C_w(H_w(t-1) \cup \{x_t\})$ \textbf{and} $x_t \in C_f(H_f(t-1) \cup \{x_t\})$}
    \State $H_w(t) \gets C_w(H_w(t-1) \cup \{x_t\})$ \Comment{accept: update both endpoints together}
    \State $H_f(t) \gets C_f(H_f(t-1) \cup \{x_t\})$
  \Else
    \State $H_w(t) \gets H_w(t-1)$, \; $H_f(t) \gets H_f(t-1)$ \Comment{reject permanently}
  \EndIf
  \State $H_j(t) \gets H_j(t-1)$ for every agent $j \notin \{w, f\}$
\EndFor
\State \Return the terminal holdings $\big(H_i(n)\big)_{i \in F \cup W}$
\end{algorithmic}
\end{algorithm}

Our algorithm differs from classical DAA in two respects. First, it has no proposing side: every contract is evaluated against both endpoints' current holdings, eliminating any structural proposer advantage. Second, acceptance is permanent rather than tentative. A contract enters $\nu^\ast$ only after both endpoints choose it and is never subsequently displaced (Lemma~\ref{lem:invariant}).

The definition of stability allows a blocking deviation to consist of an entire set of contracts. Our analysis, however, proceeds by testing one contract at a time. The following lemma shows that this is without loss of generality.

\begin{lemma}[Reduction of Blocking Sets]\label{lem:reduction}
Suppose $C_i$ satisfies SUB. Let $\nu_i \subseteq X_i$ be IR, and let $Z_i \subseteq X_i$ satisfy $Z_i \cap \nu_i = \emptyset$ and $Z_i \subseteq C_i(\nu_i \cup Z_i)$. Then $x \in C_i(\nu_i \cup \{x\})$ for every $x \in Z_i$.
\end{lemma}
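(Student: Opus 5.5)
The argument is a single application of SUB, with $A=\nu_i\cup\{x\}$ and $B=\nu_i\cup Z_i$. Fix an arbitrary $x\in Z_i$. Since $x\in Z_i$, we have the inclusion
\[
\nu_i\cup\{x\}\subseteq \nu_i\cup Z_i .
\]
By hypothesis $x\in Z_i\subseteq C_i(\nu_i\cup Z_i)$, so $x\in C_i(\nu_i\cup Z_i)\cap(\nu_i\cup\{x\})$.

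Applying SUB to $A=\nu_i\cup\{x\}\subseteq B=\nu_i\cup Z_i$ gives
\[
C_i(\nu_i\cup Z_i)\cap(\nu_i\cup\{x\})\subseteq C_i(\nu_i\cup\{x\}).
\]
Hence $x\in C_i(\nu_i\cup\{x\})$, which is the claim. Because $x\in Z_i$ was arbitrary, the conclusion holds for every $x\in Z_i$. Equivalently, one can invoke the contrapositive form in Observation~\ref{obs:subcc}(i): if $x\notin C_i(\nu_i\cup\{x\})$, then $x\notin C_i(\nu_i\cup Z_i)$, contradicting $Z_i\subseteq C_i(\nu_i\cup Z_i)$.

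There is no real obstacle here; the only care needed is to check that the set inclusion $A\subseteq B$ holds so that SUB applies. The hypotheses that $\nu_i$ is IR and that $Z_i\cap\nu_i=\emptyset$ are not used in this step. They are recorded so that $x\in Z_i$ is a genuine challenger to $\nu_i$ in the sense of the Challenger definition, which is what allows IRP to be applied to $x$ later in the paper.
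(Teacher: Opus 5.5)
Your proof is correct and is the paper's argument: SUB applied to $\nu_i\cup\{x\}\subseteq\nu_i\cup Z_i$, with $x$ lying in $C_i(\nu_i\cup Z_i)\cap(\nu_i\cup\{x\})$. You are also right that neither the IR hypothesis on $\nu_i$ nor disjointness is used. One small correction to your closing remark: the paper does not use this lemma to trigger IRP. It combines the lemma with Lemma~\ref{lem:srirc}, which rests on permanence of rejection, in the no-blocking step.
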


\begin{proof}
Fix $x \in Z_i$. Since $\nu_i \cup \{x\} \subseteq \nu_i \cup Z_i$, SUB gives $C_i(\nu_i \cup Z_i) \cap (\nu_i \cup \{x\}) \subseteq C_i(\nu_i \cup \{x\})$. Since $x \in Z_i \subseteq C_i(\nu_i \cup Z_i)$ and $x \in \nu_i \cup \{x\}$, $x$ lies in the left-hand intersection, hence $x \in C_i(\nu_i \cup \{x\})$.
\end{proof}

We now present our main result. 

\subsection{The main result (many-to-many)}\label{sec:mainresults}

Since the set of contracts $X$ is finite, the algorithm examines each contract exactly once and therefore terminates after $|X|$ rounds. The following theorem shows that its output is CY-stable.

\begin{theorem}\label{thm:stability}
Suppose every agent's choice correspondence, firm or worker, satisfies SUB and IRP. Then the terminal matching obtained by the SMA (Algorithm \ref{alg:main}) is CY-stable.
\end{theorem}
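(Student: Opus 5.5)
The plan is to establish an invariant along the run of the SMA and then derive both parts of CY-stability from it: individual rationality first, then absence of blocking sets.

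\textbf{Step 1: the invariant.} I will show by induction on $t$ that for every agent $i$ the holding $H_i(t)$ is IR, i.e.\ $C_i(H_i(t))=H_i(t)$, and that holdings are weakly increasing, $H_i(t-1)\subseteq H_i(t)$. This is presumably the content of Lemma~\ref{lem:invariant}, referenced in the text. The base case is $H_i(0)=\emptyset$; SUB gives $C_i(\emptyset)\subseteq\emptyset$. In the inductive step only the endpoints $w,f$ of an accepted $x_t$ change. For such an endpoint $i$, $H_i(t-1)$ is IR by hypothesis and $x_t\notin H_i(t-1)$, since each contract is processed only once. Acceptance says exactly that $x_t$ is a challenger to $H_i(t-1)$. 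IRP then gives $H_i(t-1)\subseteq C_i(H_i(t-1)\cup\{x_t\})$. Together with $x_t$ being chosen, this forces $H_i(t)=C_i(H_i(t-1)\cup\{x_t\})=H_i(t-1)\cup\{x_t\}$. Finally, $H_i(t)$ is IR by Observation~\ref{obs:subcc}(ii).

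\textbf{Step 2: consequences of the invariant.} The holdings only ever grow, by the accepted contract. Hence $H_w(n)$ and $H_f(n)$ consist exactly of the accepted contracts naming $w$ and $f$ respectively, and the terminal matching is well defined, with $\nu^\ast_i=H_i(n)$. Individual rationality of $\nu^\ast$ is then just the invariant at $t=n$.

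\textbf{Step 3: no blocking set.} Suppose $Z\subseteq X\setminus\nu^\ast$ is nonempty and blocks $\nu^\ast$. Pick any $x=x_t\in Z$, with endpoints $w$ and $f$. For each $i\in\{w,f\}$ we have $Z_i\ni x$, and $Z_i\subseteq C_i(\nu^\ast_i\cup Z_i)$, with $\nu^\ast_i$ IR. Lemma~\ref{lem:reduction} therefore gives $x\in C_i(\nu^\ast_i\cup\{x\})$ for both endpoints.

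Since $x\notin\nu^\ast$ and accepted contracts are never removed, $x$ was rejected at round $t$. So for some endpoint $i$, $x\notin C_i(H_i(t-1)\cup\{x\})$. By monotonicity, $H_i(t-1)\cup\{x\}\subseteq \nu^\ast_i\cup\{x\}$. Observation~\ref{obs:subcc}(i) (SUB) then yields $x\notin C_i(\nu^\ast_i\cup\{x\})$, which is a contradiction. So no blocking set exists, and $\nu^\ast$ is CY-stable.

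\textbf{Main obstacle.} The delicate step is Step 1. One must check that IRP applies: the current holding must be IR, and $x_t$ must be new to it. Once this holds, the choice from $H\cup\{x_t\}$ is pinned down as the whole set. Everything else follows from monotonicity of the holdings together with SUB, used through Lemma~\ref{lem:reduction} and Observation~\ref{obs:subcc}.
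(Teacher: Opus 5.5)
Your proposal is correct and follows the paper's proof essentially step for step. It has the same four parts: monotone, IR holdings via IRP (Lemma~\ref{lem:invariant}); permanent rejection via SUB; IR of the terminal holdings; and ruling out any blocking set by applying Lemma~\ref{lem:reduction} against the rejection at the contract's processing round. Your sharper conclusion $H_i(t)=H_i(t-1)\cup\{x_t\}$ upon acceptance, and your explicit check that $x_t\notin H_i(t-1)$ so that IRP's challenger hypothesis applies, are small clarifications the paper leaves implicit.
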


The proof, together with a discussion of the roles of SUB and IRP, is provided in the Appendix. Theorem~\ref{thm:stability} is the paper's main existence result: substitutability together with IRP guarantees the existence of a CY-stable matching. Thus, PI, and in particular IRC, can be replaced by IRP for existence of CY-stable matchings. 

The proof rests on a key structural property of the algorithm: each agent's holding grows monotonically throughout its execution. SUB and IRP play complementary roles in establishing this property and, ultimately, stability. Lemma~\ref{lem:invariant} shows that holdings never shrink. Because each holding is a chosen set, SUB (Observation~\ref{obs:subcc}(ii)) ensures that it is individually rational at every round, allowing IRP to be applied whenever a new contract is considered. Since contracts are introduced one at a time, IRP prevents a newly accepted contract from displacing an incumbent.

Monotonicity also makes rejection permanent. If a contract fails an agent's choice test at some round, all subsequent menus are supersets of the menu from which it was rejected. SUB therefore ensures that the contract remains rejected thereafter (Observation~\ref{obs:permrej}). Since a contract is accepted only when both endpoints choose it, rejection by either endpoint permanently excludes it.

Lemma~\ref{lem:srirc} converts this permanence into the condition needed for stability: every contract absent from the terminal matching $\nu^\ast$ is rejected by at least one endpoint when evaluated against its terminal holding. Individual rationality of $\nu^\ast$ follows because every terminal holding is a chosen set. Moreover, by Lemma~\ref{lem:reduction}, if a set $Z$ blocked $\nu^\ast$, every contract in $Z$ would have to be IR for both endpoints against their terminal holdings, contradicting Lemma~\ref{lem:srirc}. Hence, $\nu^\ast$ is CY-stable.

The framework also permits choice correspondences that return the empty set. If $C_i(S)=\emptyset$ for some feasible menu $S$, then no contract in $S$ is chosen by agent $i$. In particular, if $C_i(\nu_i\cup{x})=\emptyset$, then $x$ is rejected on $i$'s side and, by Observation~\ref{obs:permrej}, remains rejected as menus expand. Thus, allowing empty choices does not affect the analysis.

We turn next to the one-to-one case. CY-stability is not natural when each agent can hold at most one contract: with $|\nu_i|\leq 1$, any profitable deviation must replace the current contract rather than add to it. We therefore adopt a replacement-based stability notion (R-stability, Definition~\ref{def6}) and show that SUB together with BA guarantees the existence of an R-stable matching. Section~\ref{sec5} then combines this with IRP to treat the many-to-one case.

\section{The one-to-one matching problem}\label{sec4}

Let $| F| = | W| = m$. In this section we restrict attention to one-to-one matchings, i.e., $|\nu_{i}|\leq 1$ for all $i\in {F}\cup {W}$. We define it formally.

\begin{definition}[One-to-One Matching] 
A \text{one-to-one matching} is a matching $\nu\subseteq {X}$ such that for every agent $i \in {F} \cup {W}
$, $\left|\nu_i\right| \leq 1$. Therefore, in a one-to-one matching each agent is involved in at most one contract. 

\end{definition}

Note that this also implies that each firm is uniquely matched to a worker (and vice versa) through one contract. However, the set of contracts could involve multiple contracts between the same worker-firm pair. We provide an axiom which is used to prove the existence of stable one-to-one matchings.

\begin{axiom}[Binary acyclicity (BA)]\label{ax:ba}
Let $x\,P\,y$ denote $C(\{x,y\})=\{x\}$. The correspondence $C$ satisfies binary
acyclicity if $P$ has no cycle: there is no sequence $x_1,\dots,x_k$ with
$x_1\,P\,x_2,\ \dots,\ x_{k-1}\,P\,x_k,\ x_k\,P\,x_1$.
\end{axiom}

\noindent BA is strictly weaker than transitivity of pairwise choice: it forbids strict
cycles but permits incomparabilities, e.g.\ $x\,P\,y$ and $y\,P\,z$ with
$C(\{x,z\})=\{x,z\}$.

CY-stability is based on set-inclusion blocking: a set of contracts blocks a matching when every agent it involves would choose those contracts \emph{alongside} its current holdings, so that blocking expands an agent's assignment. However, when $|\nu_i|\le 1$, a matched agent cannot take on an additional contract; its only possible deviation is to give up its current match in exchange for another. This replacement-based notion of blocking is similar to that in \cite{hatfield2005matching} and is more appropriate here, since every profitable deviation necessarily takes the form of complete replacement. We provide the formal definition below.

\begin{definition}[R-stability]\label{def6}
A matching $\nu \subseteq X$ is \emph{R-stable} if it is \emph{individually rational (IR)}: $C_i(\nu_i) = \nu_i$ for every agent $i \in F \cup W$, and admits no \emph{blocking contract}: there is no $x \notin \nu$ with $C_{f(x)}(\nu_{f(x)} \cup \{x\}) = \{x\}$ and $C_{w(x)}(\nu_{w(x)} \cup \{x\}) = \{x\}$.
\end{definition}

Therefore, stability in the one-to-one matching problem requires two properties: (i) the choice from the matched set should be the set itself and (ii) no other firm--worker pair should deviate by `replacing' their current contract with another contract involving them.

\subsection{Result (one-to-one)}

\begin{theorem}\label{thm2}
Suppose every agent's choice correspondence satisfies SUB and BA. Then the terminal matching obtained by the matching algorithm \ref{alg:2} is R-stable.
\end{theorem}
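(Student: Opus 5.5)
The plan is to reduce R-stability to the pairwise strict relation $P_i$ of each agent and use BA to make that relation behave like a strict order. I will read Algorithm~\ref{alg:2} as a replacement-type procedure. Each agent holds at most one contract. A contract $x$ displaces the current holdings of both endpoints exactly when each endpoint $i$ has $C_i(\nu_i\cup\{x\})=\{x\}$, i.e.\ $x\,P_i\,y$ for the held $y$, or $C_i(\{x\})=\{x\}$ if $i$ holds nothing. If the algorithm is instead a deferred-acceptance scheme driven by the $P_i$, the same steps apply.

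\textbf{Step 1 (acceptability).} I first discard every contract $x$ with $C_{f(x)}(\{x\})=\emptyset$ or $C_{w(x)}(\{x\})=\emptyset$. By Observation~\ref{obs:subcc}(i), such an $x$ is never chosen from any superset of $\{x\}$. Hence it can never be a blocking contract, and it can never be part of an IR matching. Every remaining contract satisfies $C_i(\{x\})=\{x\}$ at both endpoints. This already yields IR of any one-to-one matching built from remaining contracts, since $C_i(\nu_i)=\nu_i$ for $|\nu_i|\le 1$ (and trivially when $\nu_i=\emptyset$, as $C_i(\emptyset)\subseteq\emptyset$).

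\textbf{Step 2 (order extension and termination).} By BA, each $P_i$ restricted to the finite set $X_i$ is acyclic. So it admits a linear extension $\succ_i$, obtained by topological sorting; I place the outside option below all acceptable contracts. Along the run of Algorithm~\ref{alg:2}, each agent's held contract strictly improves in $\succ_i$, or the agent goes from unmatched to matched, whenever it is touched by a displacement. Every displacement is therefore a strict improvement for both endpoints, so no holding can recur in a cycle, and the procedure terminates. This is where BA does the real work: without it, $x\,P\,y\,P\,z\,P\,x$ would allow endless replacement. If the algorithm also lets agents be dropped, as with a rejected proposer in DA, I will instead use the standard DA potential argument with respect to $\succ_i$: each proposer moves monotonically down its list, and each receiver moves monotonically up.

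\textbf{Step 3 (no blocking contract).} Suppose $x\notin\nu^\ast$ has $C_i(\nu^\ast_i\cup\{x\})=\{x\}$ at both endpoints. Then $x\,P_i\,\nu^\ast_i$, hence $x\succ_i\nu^\ast_i$, at both endpoints; when $\nu^\ast_i=\emptyset$ this comes from acceptability instead. I then argue from the termination condition of Algorithm~\ref{alg:2}: the algorithm stops only when no contract satisfies the two-sided displacement test. In the DA reading, the argument is the usual one: the proposing endpoint must have offered $x$ earlier, and the receiver rejected it for some $y$ with $y\succ$ $x$; the receiver's holding only improves afterwards, so $\nu^\ast_i\succ_i x$, a contradiction.

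The main obstacle is the final link between $\succ_i$ and the raw choice correspondence. For the contradiction I need that a rejection at some round persists. Rejection is a statement about $C_i(\{x,y\})$ for the then-held $y$, so I must transport it to $C_i(\{x,\nu^\ast_i\})$ for the terminal holding. I will handle this using the fact that the linear extension is consistent with every strict pairwise choice, and that SUB (Observation~\ref{obs:subcc}(i)) makes non-choice persist under enlargement. Where pairwise choice is a tie or incomparable ($C=\{x,y\}$), the blocking condition $C=\{x\}$ fails anyway, so ties never create blocks.
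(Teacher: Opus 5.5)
There is a genuine gap: the replacement reading you lead with does not terminate, so Step 2 fails. Your deferred-acceptance fallback is correct and is essentially the paper's proof. Despite the pointer to Algorithm~\ref{alg:2}, which is the many-to-one procedure, the paper does the following:
\begin{itemize}
\item it extends each acyclic $P_i$ to a linear order $\succ_i$ on IR contracts;
\item it runs firm-proposing deferred acceptance on these orders;
\item it uses SUB, as in your Step 1, to see that an R-blocking contract is IR at both endpoints;
\item it notes that $C_i(\nu_i\cup\{x\})=\{x\}$ forces $x\succ_i\nu_i$, so an R-block is a $\succ$-blocking pair, contradicting Gale--Shapley stability.
\end{itemize}

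In the replacement reading, resolving a blocking contract $x=(f,w)$ improves $f$ and $w$. It also strips the former partners of $f$ and $w$ of their contracts, and those agents get worse. So ``every displacement is a strict improvement for both endpoints'' gives no potential function, and such dynamics really do cycle, even with strict orders. Take two firms and two workers, all singletons chosen, with these pairwise choices:
\begin{itemize}
\item $f_1$ chooses $(f_1,w_1)$ over $(f_1,w_2)$;
\item $f_2$ chooses $(f_2,w_2)$ over $(f_2,w_1)$;
\item $w_1$ chooses $(f_2,w_1)$ over $(f_1,w_1)$;
\item $w_2$ chooses $(f_1,w_2)$ over $(f_2,w_2)$.
\end{itemize}
SUB and BA hold. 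Starting from $\varnothing$, the sequence $\varnothing\to\{(f_1,w_1)\}\to\{(f_2,w_1)\}\to\{(f_2,w_2)\}\to\{(f_1,w_2)\}\to\{(f_1,w_1)\}$ resolves an R-blocking contract at every step and returns to its start. In that reading Step 3 is just the stopping rule, so the whole proof rests on a termination claim that is false. BA's only job is to make the linear extension exist. Termination must come from DA's structure: proposers move down their lists and receivers move up.

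Your ``main obstacle'' paragraph is also misdirected. In a one-to-one market holdings are replaced, not enlarged, so $\{x,y\}\not\subseteq\{x\}\cup\nu^\ast_i$. Observation~\ref{obs:subcc}(i) therefore cannot carry a rejection at $\{x,y\}$ forward to $\{x\}\cup\nu^\ast_i$.

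Nor is any such transport needed. If DA runs on $\succ_i$, a rejection means $y\succ_i x$, and the receiver's holdings only rise, so $\nu^\ast_i\succeq_i y\succ_i x$. The block gives $x\,P_i\,\nu^\ast_i$, hence $x\succ_i\nu^\ast_i$. Transitivity and asymmetry of the linear order finish the argument, entirely at the level of $\succ_i$.

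This is exactly why DA must run on the extension rather than on raw $P_i$. Consider a receiver that keeps its incumbent unless the newcomer is strictly $P_i$-better. It can reject $x$ while holding $y$ with $C_i(\{x,y\})=\{x,y\}$, later switch to $z$ with $z\,P_i\,y$, and end with $x\,P_i\,z$. There is no cycle, so BA is not violated, and SUB says nothing because the menus are not nested.

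Commit to deferred acceptance on the linear extensions $\succ_i$, and drop the replacement reading and the SUB-based transport step.
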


We provide a sketch of the proof of this theorem. The argument has three steps: BA yields a preference order (extended) for each agent, deferred acceptance turns these orders into a matching, and SUB certifies that the matching is
stable in the sense R-stability requires. For each agent $i$, define a strict pairwise
relation by $x P_i x'$ whenever $C_i(\{x,x'\}) = \{x\}$. Binary acyclicity implies that this relation has no cycles, so it extends to a strict linear
order over $i$'s IR contracts; pairs on which $i$ chooses both or neither are ties, broken arbitrarily without affecting what follows. Running the firm-proposing deferred acceptance algorithm on these orders produces a one-to-one matching $\nu$, since each agent holds at most one contract. Individual rationality is immediate: firms propose, and workers hold, only IR contracts, so each $\nu_i$ is either empty or a self-chosen singleton, whence $C_i(\nu_i)=\nu_i$.

It remains to rule out blocking. R-stability tests blocking through actual choices, $C_i(\nu_i\cup\{x\})=\{x\}$, whereas deferred acceptance only guarantees the absence of blocks stated through the constructed order. To bridge the two, suppose a contract $x=(f,w)\notin\nu$, R-blocks $\nu$. On each side $C_i(\nu_i\cup\{x\})=\{x\}$ means $x$ is chosen from the pair $\nu_i\cup\{x\}$, and SUB (Observation~\ref{obs:subcc}, applied to the singleton $\{x\}$) then forces $x\in C_i(\{x\})$, so $x$ is IR for $i$ and therefore appears in $i$'s order. Given IR, $C_i(\{x,\nu_i\})=\{x\}$ yields $xP_i\nu_i$ on both sides, so $x$ is a blocking pair for $\nu$ under the constructed orders---contradicting the stability of the deferred-acceptance outcome. Hence no blocking contract exists and $\nu$ is R-stable.

The role of SUB is to connect pairwise comparisons to individual acceptability: it ensures that any contract that can R-block a matching is individually acceptable to both of its endpoints (IR). This bridges the choice-based notion of blocking in R-stability and the order-based exclusion of blocking underlying deferred acceptance. BA alone does not provide this link: a contract may prevail in the relevant pairwise comparisons without being individually acceptable, and hence may block without being capable of replacing an existing contract. SUB is therefore essential for translating binary acyclicity into the absence of profitable replacements.

An implication of Theorem~\ref{thm2} and Proposition~\ref{prop:ba-irc} is that PI is not necessary for R-stability. Proposition~\ref{prop:ba-irc} exhibits a correspondence satisfying SUB and BA but not IRC, hence violating PI, while Theorem~\ref{thm2} guarantees an R-stable matching whenever all agents satisfy SUB and BA. Thus SUB and BA yield existence even outside PI.

The one-to-one and many-to-many results isolate two distinct mechanisms. In the many-to-many market (Section~\ref{sec3}), IRP drives existence: holdings expand monotonically, acceptance is persistent, and set-inclusion blocking is the appropriate stability criterion. In the one-to-one market (Section~\ref{sec4}), BA provides the pairwise ordering for deferred acceptance while SUB ensures that acceptances are IR. The many-to-one market combines these mechanisms: firms hold multiple contracts and satisfy IRP, while capacity-one workers are governed by BA. The next section shows that SUB with IRP on firms and BA on workers guarantees existence of stable matchings in this setting.

\section{The many-to-one matching problem}\label{sec5}

We now study the many-to-one case, in which workers are capacity-one but firms are not. A matching $\nu$ is \emph{many-to-one} if $|\nu_w|\le 1$ for every worker $w\in W$, with no constraint on firms. Therefore, on the worker side the situation is that of the one-to-one problem: a matched worker can only exchange her contract for another, not add to it, so replacement blocking ($C_w(\nu_w\cup\{x\})=\{x\}$) is relevant, as in R-stability (Section~\ref{sec4}). On the firm side, by contrast, firms hold many contracts, an added contract expands the holding, and set-inclusion blocking remains appropriate, as in CY-stability (Section~\ref{sec3}). MO-stability combines the two: inclusion blocking on firms, replacement blocking on workers.

\begin{definition}[MO-stability]\label{def:mo}
A many-to-one matching $\nu$ is \emph{MO-stable} if it is individually rational ($C_i(\nu_i)=\nu_i$ for every agent $i$) and admits no blocking contract: there is no $x\notin\nu$, with $f=f(x)$ and $w=w(x)$, such that
\[
x\in C_f\!\left(\nu_f\cup\{x\}\right)
\qquad\text{and}\qquad
C_w\!\left(\nu_w\cup\{x\}\right)=\{x\}.
\]
\end{definition}

By Lemma~\ref{lem:reduction}, on the
firm side single-contract blocking is without loss of generality due to SUB, so MO-stability rules out all set blocks a many-to-one market admits. The matching algorithm is updated accordingly.

\subsection{The algorithm}

Workers propose; firms accept permanently. Each worker's proposal order is the linear order $\succ_w$ obtained from BA, restricted to acceptable contracts; ties are broken arbitrarily as in the proof of Theorem~\ref{thm2}.

\begin{algorithm}[H]
\caption{Worker-proposing matching with permanent firm acceptance}\label{alg:2}
\begin{algorithmic}[1]
\State $H_f \leftarrow \emptyset$ for every firm $f$
\State for every worker $w$, let $L_w$ be the list of $w$'s acceptable contracts in $\succ_w$-order
\While{some worker $w$ holds no accepted contract and $L_w$ is non-empty}
  \State $x \leftarrow$ the $\succ_w$-maximal contract remaining in $L_w$; let $f \leftarrow f(x)$
  \State remove $x$ from $L_w$
  \If{$x \in C_f(H_f \cup \{x\})$}
     \State $H_f \leftarrow C_f(H_f \cup \{x\})$ \Comment{by IRP, $H_f$ grows: prior contracts retained}
     \State $w$ is now matched to $x$ \Comment{permanent: $x$ is never displaced from $H_f$}
  \Else
     \State $x$ is permanently rejected by $f$ \Comment{by SUB, $x \notin C_f(H_f' \cup \{x\})$ for all later $H_f' \supseteq H_f$}
  \EndIf
\EndWhile
\State \Return $\nu = \bigcup_f H_f$
\end{algorithmic}
\end{algorithm}

Because a worker proposes best-first and stops at its first acceptance, at most one
contract per worker ever enters any firm's holding, so $|\nu_w|\le 1$: the output is
a many-to-one matching. The process terminates since each of the $|X|$ contracts is
proposed at most once.

We show that IRP on firms and BA on workers suffices for existence of MO-stable matchings placing our result outside the domain of PI.

\subsection{Result (many-to-one)}

\begin{theorem}\label{thm:mo}
Suppose every firm's choice correspondence satisfies SUB and IRP, and every worker's
choice correspondence satisfies SUB and BA. Then the terminal matching obtained by the Algorithm \ref{alg:2} is MO-stable.
\end{theorem}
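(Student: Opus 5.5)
The plan is to follow the same skeleton as the proof of Theorem~\ref{thm:stability} on the firm side and the argument sketched for Theorem~\ref{thm2} on the worker side. Throughout I read ``acceptable'' for worker $w$ as $C_w(\{x\})=\{x\}$. I take $\succ_w$ to be a linear extension of the strict pairwise relation $P_w$ restricted to acceptable contracts; it exists because BA makes $P_w$ acyclic on a finite set.

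\textbf{Step 1: firm holdings grow monotonically.} I argue by induction over the iterations of the while-loop that each firm's holding $H_f$ is IR and weakly increases. Initially $H_f=\emptyset$ is IR. Suppose $H_f$ is IR and a proposal $x$ is accepted, i.e.\ $x\in C_f(H_f\cup\{x\})$. Then $x$ is a challenger to $H_f$, and IRP gives $H_f\subseteq C_f(H_f\cup\{x\})$. Hence the new holding equals $H_f\cup\{x\}$. By Observation~\ref{obs:subcc}(ii) it is a chosen set and therefore IR again. Two consequences follow:
\begin{itemize}
\item an accepted contract is never displaced, so each worker stays matched once accepted and proposes nothing further;
\item the final $\nu_f$ is the union of the contracts accepted by $f$.
\end{itemize}

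\textbf{Step 2: rejection is permanent, and $\nu$ is a many-to-one matching.} If $x$ was rejected by $f$ when its holding was $H_f$, then $x\notin C_f(H_f\cup\{x\})$. Since $H_f\cup\{x\}\subseteq \nu_f\cup\{x\}$ by Step 1, Observation~\ref{obs:subcc}(i) yields $x\notin C_f(\nu_f\cup\{x\})$. Each worker stops proposing at her first acceptance and is never displaced, so $|\nu_w|\le 1$. For individual rationality: $C_f(\nu_f)=\nu_f$ by Step 1. For each worker, $\nu_w$ is either $\emptyset$ or $\{x\}$ with $x$ acceptable, so $C_w(\nu_w)=\nu_w$.

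\textbf{Step 3: no blocking contract.} Suppose $x\notin\nu$ with $f=f(x)$, $w=w(x)$ satisfies $x\in C_f(\nu_f\cup\{x\})$ and $C_w(\nu_w\cup\{x\})=\{x\}$. By SUB applied to $\{x\}\subseteq\nu_w\cup\{x\}$, we get $x\in C_w(\{x\})$, so $x$ is acceptable and lies in $w$'s initial list $L_w$. I split into two cases.
\begin{itemize}
\item If $\nu_w=\emptyset$, the loop ended with $L_w$ exhausted. So $x$ was proposed and, not being in $\nu$, was rejected by $f$.
\item If $\nu_w=\{y\}$, then $C_w(\{x,y\})=\{x\}$, so $x\,P_w\,y$ and hence $x\succ_w y$. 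Thus $w$ proposed $x$ before $y$, and $x$ was rejected, since otherwise $w$ would have been matched to $x$ permanently.
\end{itemize}
In either case Step 2 gives $x\notin C_f(\nu_f\cup\{x\})$, a contradiction. Note that ties in $\succ_w$ are harmless. A blocking contract requires the strict singleton choice $C_w(\{x,y\})=\{x\}$, which is exactly $P_w$, and $\succ_w$ respects $P_w$.

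\textbf{Main obstacle.} I expect the delicate point to be Step 1 under interleaved proposals from many workers. IRP must be invoked only on IR holdings, and the challenger condition must coincide exactly with the acceptance test. This is what ensures the chosen set equals $H_f\cup\{x\}$ rather than some other superset-compatible choice. Once that invariant is secured, permanence of rejection and the worker-side order argument are routine.
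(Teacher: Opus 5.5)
Your proof is correct and follows essentially the same route as the paper: IRP applied to IR firm holdings gives monotone holdings, SUB along the growing menus makes rejection permanent, and the BA-induced order $\succ_w$ forces any would-be blocking contract to have been proposed and rejected before the worker settled. Your explicit case split on $\nu_w$ (using the loop's termination condition when $\nu_w=\emptyset$) and your observation that the updated holding is exactly $H_f\cup\{x\}$ only spell out steps the paper's proof states more briefly.
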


Theorem~\ref{thm:mo} is the counterpart, in the setting where the matching-with-contracts literature is placed, of our many-to-many result. \cite{aygun2013matching} show that SUB alone is insufficient there: without IRC, substitutability does not guarantee a stable allocation. By Proposition~\ref{prop:irc-irp}, an IRP correspondence need not satisfy IRC, so Theorem~\ref{thm:mo} provides existence even where IRC, hence PI, fails. The contrast with the cumulative-offer process is instructive.  There a firm chooses from the entire accumulated menu, and ruling out blocks requires that a contract rejected from a large menu stay rejected on the smaller menu, $\nu_f\cup\{x\}$, this consistency is provided by IRC. Our firms instead process proposals one at a time, so IRP, a single-challenger condition, keeps holdings monotone, and SUB along the growing menu provides the same rejection-permanence that IRC provides across menu sizes. The block is thereby ruled out without IRC.

\section{Conclusion}
The two relaxations of IRC differ in character. In the many-to-many market, IRP is
logically independent of IRC---an incumbency condition that neither implies nor is implied
by menu-invariance---and it drives existence by making acceptance permanent while SUB makes
rejection permanent. In the one-to-one market, BA is a strictly weaker condition than IRC
under SUB: it constrains only strict pairwise choice, forbidding cycles without imposing
full menu-consistency, and SUB turns the resulting acyclic relation into an order along
which deferred acceptance yields an R-stable matching. The many-to-one market combines the
two, and existence holds throughout outside path independence.

Several questions remain open. It is unknown whether the set of CY-stable matchings under
IRP retains the lattice structure associated with PI, and how IRP and BA extend to markets
with capacities or with agents arriving over time.\footnote{\cite{alkan2003stable} study
multi-valued choice in matching and extend \cite{alkan2001preferences} and
\cite{alkan2002class} to show that stable matchings exist and that the set of stable
matchings forms a lattice.}

Incentive properties raise a further set of open questions, and the three settings differ
in what to expect. The symmetric many-to-many algorithm has no proposing side, so the
proposer-optimality behind classical strategy-proofness has no analogue, and we expect it
to be manipulable.\footnote{For a capacity-one agent, whose assignment is a single
contract, the true correspondence induces via BA a strict order $\succ_i$; a report is an
admissible (BA) correspondence $\hat C_i$, and $i$ manipulates if the contract the induced
order yields is $\succ_i$-preferred to the truthful one. For a capacity-many agent $f$,
whose assignment is a set, one may instead compare bundles by revealed choice---$f$ prefers
$T'$ to $T$ if $C_f(T\cup T')=T'$---a comparison that is only partial and coincides with a
ranking over sets exactly when $C_f$ satisfies IRC \citep{yang2020rationalizable}.} The
one-to-one and many-to-one procedures are deferred-acceptance algorithms, for which one
anticipates strategy-proofness for the proposing side and, by the impossibility of
two-sided strategy-proofness \citep{roth1982economics}, manipulability for the other. When
firms' choices are responsive, the many-to-one procedure reduces to classical
worker-proposing deferred acceptance, so workers cannot profitably misreport; whether this
extends to all substitutable, IRP firm correspondences, and whether IRP's permanent
acceptance strengthens or weakens the proposing side's guarantees, is open.

\section*{Acknowledgements} The authors would like to thank Jacob Coreno, Bhaskar Dutta, Debasis Mishra, Arunava Sen, and Rohit Vaish for comments and suggestions. The authors acknowledge the use of Gemini and Claude AI to improve the readability of the manuscript and to help refine the presentation of the mathematical proofs and algorithms. The authors take full and sole responsibility for the original ideas, theoretical results, and the final content of this manuscript. No data was used in the paper. The authors declare that there are no conflicts of interest and no external funding was received for this research.

\bibliography{biblio.bib}

\appendix

\section*{Appendix}

\section{The many-to-many matching problem}

\subsection{Proof of Theorem \ref{thm:stability}}

\begin{proof}
We establish the theorem through a sequence of lemmas.

\begin{lemma}\label{lem:invariant}
Suppose $C_i$ satisfies SUB and IRP. Then $H_i(t-1) \subseteq H_i(t)$ for every $t$ and every agent $i$.
\end{lemma}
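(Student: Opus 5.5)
The plan is to argue round by round, treating one round $t$ and one agent $i$ at a time. There are three cases.

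\emph{Case 1: $i$ is not an endpoint of $x_t$.} The algorithm sets $H_i(t)=H_i(t-1)$, so the inclusion is trivial.

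\emph{Case 2: $i$ is an endpoint but $x_t$ is rejected.} Condition \eqref{eq:accept} fails, so both endpoints keep their holdings. Again $H_i(t)=H_i(t-1)$.

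\emph{Case 3: $x_t$ is accepted and $i\in\{w(x_t),f(x_t)\}$.} Here $H_i(t)=C_i(H_i(t-1)\cup\{x_t\})$, and this is the only substantive case. We want to apply IRP with $S=H_i(t-1)$ and challenger $x_t$. That requires three things.
\begin{enumerate}[label=(\alph*)]
\item $H_i(t-1)$ is IR.
\item $x_t\notin H_i(t-1)$.
\item $x_t\in C_i(H_i(t-1)\cup\{x_t\})$.
\end{enumerate}
Point (c) is exactly the acceptance test \eqref{eq:accept} on $i$'s side. Point (b) holds because contracts are enumerated without repetition, and a holding only ever contains contracts processed at earlier rounds (each holding is a subset of the menu it was chosen from). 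Once (a)--(c) are in place, IRP gives $H_i(t-1)\subseteq C_i(H_i(t-1)\cup\{x_t\})=H_i(t)$.

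\textbf{Main obstacle: point (a).} I would prove it inductively, or simply note that every holding is either $\emptyset$ or the output of $C_i$ applied to some menu. The set $\emptyset$ is IR because $C_i(\emptyset)\subseteq\emptyset$. If $H_i(s)=C_i(T)$ for some menu $T$, then Observation~\ref{obs:subcc}(ii), which follows from SUB, gives $C_i(H_i(s))=H_i(s)$. Holdings that are merely carried over unchanged inherit IR from the previous round. So a short induction on $s$ shows every $H_i(s)$ is IR, and it needs only SUB.

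Assembling the three cases gives $H_i(t-1)\subseteq H_i(t)$ for all $t$ and all $i$. Chaining these inclusions also yields monotonicity across any pair of rounds, which later lemmas will use.
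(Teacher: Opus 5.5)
Your proof is correct and follows essentially the same route as the paper: first show every holding is IR (as $\emptyset$ or as a chosen set, via Observation~\ref{obs:subcc}(ii)), then dispose of the non-endpoint and rejection cases trivially and apply IRP with challenger $x_t$ in the acceptance case. Your explicit check that $x_t\notin H_i(t-1)$, which the definition of a challenger requires, is a detail the paper leaves implicit.
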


\begin{proof}
$H_i(t)$ is IR for every $t$: $H_i(0) = \emptyset$ trivially, and whenever $H_i$ updates, it is set to $C_i(\cdot)$ applied to some set, which is IR by SUB (Observation~\ref{obs:subcc}(ii)).

If $i$ is not an endpoint of $x_t$, $H_i(t) = H_i(t-1)$, trivially. If $i$ is an endpoint and $x_t$ is rejected, $H_i(t) = H_i(t-1)$, trivially. If $i$ is an endpoint and $x_t$ is accepted, then in particular $x_t \in C_i(H_i(t-1) \cup \{x_t\})$, i.e.\ $x_t$ is a challenger to the IR set $H_i(t-1)$. IRP applies directly: $H_i(t-1) \subseteq C_i(H_i(t-1) \cup \{x_t\}) = H_i(t)$.
\end{proof}

Permanence of rejection is derived from this using SUB as follows. 

\begin{observation}\label{obs:permrej}
Suppose $C_i$ satisfies SUB and IRP. If $x_t = (f,w)$ is rejected at round $t$ because $x_t \notin C_i(H_i(t-1) \cup \{x_t\})$ for some $i \in \{w,f\}$, then $x_t \notin C_i(H_i(t') \cup \{x_t\})$ for every $t' \ge t$; in particular $x_t \notin H_i(t')$ for every $t' \ge t$.
\end{observation}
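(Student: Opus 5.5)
The plan is to combine the monotonicity of holdings from Lemma~\ref{lem:invariant} with Observation~\ref{obs:subcc}(i). Fix the agent $i\in\{w,f\}$ whose test fails at round $t$, so $x_t\notin C_i(H_i(t-1)\cup\{x_t\})$. Let $t'\ge t$ be arbitrary.

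\textbf{Step 1: the menus are nested.} Because $x_t$ is rejected at round $t$, the holding does not change: $H_i(t)=H_i(t-1)$. Applying Lemma~\ref{lem:invariant} repeatedly (it holds because $C_i$ satisfies SUB and IRP) gives $H_i(t-1)=H_i(t)\subseteq H_i(t')$. Adding $x_t$ to both sides yields
\[
A:=H_i(t-1)\cup\{x_t\}\subseteq B:=H_i(t')\cup\{x_t\}.
\]

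\textbf{Step 2: rejection persists under SUB.} We have $x_t\in A\subseteq B$ and $x_t\notin C_i(A)$. Observation~\ref{obs:subcc}(i) then gives $x_t\notin C_i(B)$, which is the first claim.

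\textbf{Step 3: the contract is never held.} We argue by induction on $s\ge t$ that $x_t\notin H_i(s)$.

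\emph{Base case.} We have $x_t\notin H_i(t)=H_i(t-1)$, because $H_i(t-1)$ contains only contracts processed at rounds $1,\dots,t-1$, and each contract is processed exactly once.

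\emph{Inductive step.} Suppose $x_t\notin H_i(s-1)$. If $H_i$ does not update at round $s$, the claim carries over. If it does update, then $H_i(s)=C_i(H_i(s-1)\cup\{x_s\})$ with $x_s\neq x_t$. In that case $x_t\notin H_i(s-1)\cup\{x_s\}$, so $x_t$ cannot belong to a subset of that set.

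An alternative route for this step avoids the induction. Since $H_i(t')\subseteq B$, SUB gives $C_i(B)\cap H_i(t')\subseteq C_i(H_i(t'))=H_i(t')$ by individual rationality. That route does not by itself rule out $x_t\in H_i(t')$, so I would use the direct induction instead.

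The only delicate point is Step 1. It requires that $H_i(t)=H_i(t-1)$ at the rejection round for the very endpoint $i$ that failed the test. This holds because the algorithm freezes both endpoints' holdings on rejection, so Lemma~\ref{lem:invariant} applies from round $t-1$ onward without a gap.
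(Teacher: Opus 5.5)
Your proof is correct, and for the main claim it is the paper's own argument: Lemma~\ref{lem:invariant} gives $H_i(t-1)\cup\{x_t\}\subseteq H_i(t')\cup\{x_t\}$, and SUB (Observation~\ref{obs:subcc}(i)) carries the rejection over to the larger menu. Your induction in Step 3 is valid, but you are wrong that the short route fails, since it follows directly from Step 2: if $x_t\in H_i(t')$ then $H_i(t')\cup\{x_t\}=H_i(t')$, so individual rationality of the holding gives $x_t\in C_i(H_i(t')\cup\{x_t\})$, contradicting Step 2, which is why the paper states this part with ``in particular'' and gives no separate argument.
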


\begin{proof}
By hypothesis, $x_t \notin C_i(H_i(t-1) \cup \{x_t\})$. By Lemma~\ref{lem:invariant}, $H_i(t-1) \subseteq H_i(t')$ for $t' \ge t$, so $H_i(t-1) \cup \{x_t\} \subseteq H_i(t') \cup \{x_t\}$, and SUB gives $x_t \notin C_i(H_i(t') \cup \{x_t\})$.
\end{proof}

Thus every rejection is permanent, regardless of whether it originates from the firm or the worker. Once a contract is rejected, it is never reconsidered.

\begin{lemma}\label{lem:srirc} 

If $x \notin \nu^\ast$, then $x \notin C_w(\nu_w^\ast \cup \{x\})$ or $x \notin C_f(\nu_f^\ast \cup \{x\})$ (or both), where $w = w(x)$, $f = f(x)$.

\end{lemma}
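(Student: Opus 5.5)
Fix $x\notin\nu^\ast$ with endpoints $w=w(x)$ and $f=f(x)$, and let $t$ be the unique round at which $x=x_t$ is processed. The plan is to split on what happened at round $t$ and carry the outcome forward to the terminal round using monotonicity of holdings (Lemma~\ref{lem:invariant}) together with SUB.

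\emph{Case 1: $x_t$ was rejected at round $t$.} Then the acceptance condition \eqref{eq:accept} fails, so for some $i\in\{w,f\}$ we have $x_t\notin C_i(H_i(t-1)\cup\{x_t\})$. Observation~\ref{obs:permrej} with $t'=n$ gives $x\notin C_i(H_i(n)\cup\{x\})=C_i(\nu_i^\ast\cup\{x\})$, which is the claim.

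\emph{Case 2: $x_t$ was accepted at round $t$.} Then $x\in H_w(t)$ and $x\in H_f(t)$, since both holdings are set to the chosen sets, which contain $x_t$. By Lemma~\ref{lem:invariant}, holdings only grow, so $x\in H_w(n)$ and $x\in H_f(n)$. By the terminal-matching remark, this means $x\in\nu^\ast$, contradicting the hypothesis. So Case 2 cannot occur.

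The only point needing care is Case 2: we must confirm that acceptance really places $x_t$ in both updated holdings, which holds because the update condition requires $x_t$ to be chosen by each endpoint. We also need that no later round can remove it, which is exactly IRP as packaged in Lemma~\ref{lem:invariant}. Everything else is a direct invocation of earlier results, so I expect no substantive obstacle.
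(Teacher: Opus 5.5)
Your proof is correct and follows the paper's argument: at round $t$ some endpoint rejects $x$, and Observation~\ref{obs:permrej} with $t'=n$ carries that rejection to the terminal holding. Your Case~2 uses Lemma~\ref{lem:invariant} to justify a step the paper only asserts, namely that a contract outside $\nu^\ast$ must have been rejected at the round it was processed.
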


\begin{proof}
Since $x \notin \nu^\ast$, $x$ was rejected at the single round $t$ it was processed (every contract is processed exactly once): $x \notin C_i(H_i(t-1) \cup \{x\})$ for some $i \in \{w,f\}$. Observation \ref{obs:permrej} gives $x \notin C_i(H_i(n) \cup \{x\}) = C_i(\nu_i^\ast \cup \{x\})$ for that same $i$.
\end{proof}

\emph{Individual rationality.} Whenever $H_i(t)$ updates, it is set to $C_i(\cdot)$ applied to some set IR by Observation~\ref{obs:subcc}(ii); it is unchanged otherwise. Hence $\nu_i^\ast = H_i(n)$ is IR for every agent.

\emph{No blocking set.} Suppose $Z \ne \emptyset$ blocks $\nu^\ast$. Pick any $x \in Z$, and let $w = w(x)$, $f = f(x)$. Since $x \in X_w \cap X_f$, $x \in Z_w$ and $x \in Z_f$. Blocking gives $Z_w \subseteq C_w(\nu_w^\ast \cup Z_w)$ and $Z_f \subseteq C_f(\nu_f^\ast \cup Z_f)$, so Lemma~\ref{lem:reduction} applied to each side gives
\[
x \in C_w(\nu_w^\ast \cup \{x\}) \qquad \text{and} \qquad x \in
C_f(\nu_f^\ast \cup \{x\}).
\]
Since $Z \cap \nu^\ast = \emptyset$, $x \notin \nu^\ast$, so
Lemma~\ref{lem:srirc} gives
\[
x \notin C_w(\nu_w^\ast \cup \{x\}) \qquad \text{or} \qquad x \notin
C_f(\nu_f^\ast \cup \{x\}),
\]
contradicting one of the two statements above. No blocking set exists.
\end{proof}

\subsection{IRC and IRP are logically independent}\label{sec:independence}

Since PI is equivalent to SUB together with IRC, and both examples satisfy SUB, the substantive comparison is between IRC and IRP. We show these are logically independent: neither implies the other.

\begin{proposition}[Independence of IRC and IRP]\label{prop:irc-irp}
There exist choice correspondences satisfying SUB and IRC but not IRP, and correspondences satisfying SUB and IRP but not IRC.
\end{proposition}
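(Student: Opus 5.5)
The plan is to prove the two halves of Proposition~\ref{prop:irc-irp} separately, each by an explicit small example. IRC is taken in its standard form: if $C_i(B)\subseteq A\subseteq B$ then $C_i(A)=C_i(B)$. For the second half, Example~\ref{ex:statusquo} already supplies the witness. There $X_i=\{b,c\}$ with $C_i(\{b\})=\{b\}$, $C_i(\{c\})=\{c\}$ and $C_i(\{b,c\})=\varnothing$, and I will verify the three properties directly.

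For SUB, the only nontrivial inclusion is $C_i(\{b,c\})\cap\{b\}=\varnothing\subseteq C_i(\{b\})$, and symmetrically for $\{c\}$; this holds trivially. For IRP, the IR sets are $\varnothing$, $\{b\}$ and $\{c\}$.
\begin{itemize}
\item From $\varnothing$, each singleton challenger $x$ gives $\varnothing\subseteq C_i(\{x\})$, which holds.
\item From $\{b\}$, the contract $c$ is not a challenger, since $c\notin C_i(\{b,c\})$; so IRP is vacuous. The same holds for $\{c\}$.
\end{itemize}
IRC fails: $C_i(\{b,c\})=\varnothing\subseteq\{b\}$, yet $C_i(\{b\})=\{b\}\neq\varnothing$.

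For the first half, I need a correspondence satisfying SUB and IRC (that is, PI) that violates IRP. The natural candidate is Example~\ref{eg:scope}, completed to a full specification: $X_i=\{x,y\}$ with
\[
C_i(\varnothing)=\varnothing,\quad C_i(\{x\})=\{x\},\quad C_i(\{y\})=\{y\},\quad C_i(\{x,y\})=\{y\}.
\]
This is the choice function of the strict preference $y\succ x$ with a single slot. IRP fails exactly as computed in Example~\ref{eg:scope}: $\{x\}$ is IR, $y$ is a challenger, and $x\notin C_i(\{x,y\})$.

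It remains to check SUB and IRC for this correspondence.
\begin{itemize}
\item SUB: $C_i(\{x,y\})\cap\{x\}=\varnothing\subseteq C_i(\{x\})$, and $C_i(\{x,y\})\cap\{y\}=\{y\}\subseteq C_i(\{y\})$.
\item IRC: the only nontrivial pair is $A=\{y\}\subseteq B=\{x,y\}$, where $C_i(B)=\{y\}\subseteq A$. Here $C_i(A)=\{y\}=C_i(B)$, as required.
\end{itemize}
Alternatively, one can note that this correspondence is path independent because it is generated by a responsive preference with unit capacity.

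The only step that needs care is fixing the values of the correspondence on every subset, including $C_i(\{y\})$, which Example~\ref{eg:scope} leaves unspecified. If $C_i(\{y\})$ were set to $\varnothing$, SUB would fail. I will therefore state the full table explicitly and verify each inclusion exhaustively, which is routine given $|X_i|=2$.
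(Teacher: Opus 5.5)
Your proposal is correct and follows the paper's proof. Both halves are explicit two-contract witnesses checked exhaustively, and your first witness is exactly the paper's ($C(\{a\})=\{a\}$, $C(\{b\})=\{b\}$, $C(\{a,b\})=\{b\}$, relabelled); for the second half you use the status-quo correspondence of Example~\ref{ex:statusquo} rather than the paper's $C(\{a\})=\{a\}$, $C(\{b\})=C(\{a,b\})=\varnothing$, and either witness works.
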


\begin{proof}
\emph{SUB and IRC without IRP.} Let $X = \{a,b\}$ with $C(\{a\}) = \{a\}$, $C(\{b\}) = \{b\}$, $C(\{a,b\}) = \{b\}$. SUB holds: the two non-trivial pairs are $C(\{a,b\}) \cap \{a\} = \emptyset \subseteq C(\{a\})$, vacuously, and $C(\{a,b\}) \cap \{b\} = \{b\} \subseteq C(\{b\}) = \{b\}$, by equality. IRC holds: the only pair $(A,B)$ with $A \subsetneq B$ and $C(B) \subseteq A$ is $A = \{b\} \subseteq B = \{a,b\}$ (since $C(B) = \{b\}$), and $C(\{b\}) = \{b\} = C(B)$, as required; every other pair either fails $C(B) \subseteq A$, so imposes no constraint, or has $A = B$, trivially. But IRP fails: $\{a\}$ is IR, trivially, and $b$ is a challenger to it ($b \in C(\{a,b\})$); IRP would require $\{a\} \subseteq C(\{a,b\}) = \{b\}$, which is false.

\emph{SUB and IRP without IRC.} Let $X = \{a,b\}$ with $C(\{a\}) = \{a\}$, $C(\{b\}) = \emptyset$, $C(\{a,b\}) = \emptyset$. SUB holds, checked directly on both non-trivial pairs. IRP holds vacuously: $\{a\}$ is the only non-empty IR  set in this domain, and $b \notin C(\{a,b\}) = \emptyset$,
so $b$ is never a challenger to it; IRP's hypothesis is never triggered. But IRC fails: $C(\{a,b\}) = \emptyset \subseteq \{a\} \subseteq \{a,b\}$ requires $C(\{a\}) = \emptyset$, which is false since $C(\{a\}) = \{a\}$.
\end{proof}

\section{The one-to-one matching problem}
\subsection{Proof of Theorem \ref{thm2}}

\begin{proof}
We introduce some notation and a lemma. For IR $x,y\in X_i$, write $x\mathrel{P_i}y$
(``$i$ reveals $x$ over $y$'') if $C_i(\{x,y\})=\{x\}$. By Axiom~\ref{ax:ba}, $P_i$ is
acyclic for every agent $i$.

\begin{lemma}\label{lem:acc}
If $x\in C_i(S)$ for some $S\subseteq X_i$, then $x$ is IR. In particular, if
$C_i(\nu_i\cup\{x\})=\{x\}$ then $x$ is IR for $i$.
\end{lemma}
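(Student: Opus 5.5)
The plan is to apply substitutability with the singleton $\{x\}$ as the smaller menu. Here ``$x$ is IR'' means that the singleton $\{x\}$ is an IR set for $i$, i.e.\ $C_i(\{x\})=\{x\}$. So I must show $x\in C_i(\{x\})$; the inclusion $C_i(\{x\})\subseteq\{x\}$ holds automatically because $C_i(S)\subseteq S$.

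\textbf{First claim.} Suppose $x\in C_i(S)$ with $S\subseteq X_i$. Then $x\in S$, since choices are subsets of the menu. Take $A=\{x\}$ and $B=S$; these satisfy $A\subseteq B$. SUB then gives
\[
C_i(S)\cap\{x\}\subseteq C_i(\{x\}).
\]
The left-hand side equals $\{x\}$ because $x\in C_i(S)$. Hence $x\in C_i(\{x\})$, and so $C_i(\{x\})=\{x\}$. Equivalently, this is the contrapositive of Observation~\ref{obs:subcc}(i) with $A=\{x\}$ and $B=S$: if $x\notin C_i(\{x\})$, then $x\notin C_i(S)$.

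\textbf{Second claim.} Take $S=\nu_i\cup\{x\}$. The hypothesis $C_i(\nu_i\cup\{x\})=\{x\}$ gives $x\in C_i(S)$, so the first claim applies directly.

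There is no real obstacle here. The only point needing care is to fix the meaning of ``IR contract'' as $C_i(\{x\})=\{x\}$. That is exactly the condition under which $x$ enters $i$'s BA-induced order in the proof of Theorem~\ref{thm2}.
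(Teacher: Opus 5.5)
Your proof is correct and matches the paper's argument. Both apply SUB (equivalently, the contrapositive of Observation~\ref{obs:subcc}(i)) with $\{x\}\subseteq S$ to obtain $x\in C_i(\{x\})$, hence $C_i(\{x\})=\{x\}$, and the second claim is the special case $S=\nu_i\cup\{x\}$.
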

\begin{proof}
Apply Observation~\ref{obs:subcc}(i) in contrapositive form with $\{x\}\subseteq S$: since
$x\in C_i(S)$ and $x\in\{x\}$, we get $x\in C_i(\{x\})$.
\end{proof}

\textbf{Preferences.}
Fix an agent $i$. By Axiom~\ref{ax:ba}, $P_i$ is acyclic, so its transitive closure is a
strict partial order on $i$'s IR contracts; extend it to a strict linear order $\succ_i$ on
those contracts that preserves every strict comparison
($x\mathrel{P_i}y\Rightarrow x\succ_i y$). On pairs with
$C_i(\{x,y\})\in\{\varnothing,\{x,y\}\}$, $P_i$ ranks neither contract over the other, and
$\succ_i$ resolves them arbitrarily, with $\varnothing$ ranked below every IR contract as
the outside option. The resolution is immaterial: the no-blocking argument uses $P_i$ only
through the implication $C_i(\{x,\nu_i\})=\{x\}\Rightarrow x\succ_i\nu_i$, which depends only
on strict comparisons.

\textbf{Matching.}
Run the firm-proposing deferred acceptance algorithm on the strict linear orders
$(\succ_f)_{f\in F},(\succ_w)_{w\in W}$, breaking ties as fixed above
\cite{roth2008deferred}. Since each agent holds at most one contract, this yields a
one-to-one matching $\nu$. By the stability of the deferred-acceptance outcome
\cite{gale1962college}, $\nu$ has no \emph{$\succ$-blocking pair}: no $x=(f,w)\notin\nu$, IR
for both endpoints, satisfies $x\succ_f\nu_f$ and $x\succ_w\nu_w$ (with $\nu_i=\varnothing$
read as ``$i$ unmatched,'' which any IR contract beats).

\textbf{Individual rationality.}
Firms propose, and workers accept, only IR contracts, so each $\nu_i$ is either $\varnothing$
or a singleton $\{x\}$ with $x$ IR. If $\nu_i=\{x\}$ then $C_i(\{x\})=\{x\}$; if
$\nu_i=\varnothing$ then $C_i(\varnothing)=\varnothing$. Hence $C_i(\nu_i)=\nu_i$ for every
$i$, and $\nu$ is IR.

\textbf{No blocking.}
Suppose, for contradiction, that $x=(f,w)\notin\nu$ blocks $\nu$, i.e.\
$C_f(\nu_f\cup\{x\})=\{x\}$ and $C_w(\nu_w\cup\{x\})=\{x\}$. By Lemma~\ref{lem:acc}, $x$ is
IR to both $f$ and $w$. On the firm side: if $\nu_f=\varnothing$, then $x$ is IR and $f$ is
unmatched, so $x\succ_f\nu_f$; if $\nu_f=\{y\}$ with $y\ne x$, then $C_f(\{y\})=\{y\}$ (IR)
and $C_f(\{x,y\})=\{x\}$ gives $x\mathrel{P_f}y$, so $x\succ_f y=\nu_f$ since $\succ_f$
extends $P_f$. Either way $x\succ_f\nu_f$; the worker side is symmetric, giving
$x\succ_w\nu_w$. Thus $x$ is a $\succ$-blocking pair for $\nu$, contradicting the absence of
such a pair. Hence no blocking contract exists, and $\nu$ is R-stable.
\end{proof}

\subsection{BA is weaker than IRC under SUB}\label{secb2}

We show that BA is weaker than IRC under SUB.

\begin{proposition}\label{prop:ba-irc}
Under SUB, IRC implies BA. The converse fails: there is a correspondence satisfying SUB
and BA but not IRC. Hence, on the substitutable domain, BA is strictly weaker than IRC.
\end{proposition}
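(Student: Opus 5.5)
The first part goes by contradiction: I will show that a strict cycle of $P$ forces the choice from the whole cycle to be empty, and then let IRC transport that emptiness down to a pair where $P$ says the choice is non-empty. The converse I will refute with an explicit two-element correspondence, reusing one already checked in Proposition~\ref{prop:irc-irp}.

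\textbf{IRC implies BA under SUB.} Suppose $C$ satisfies SUB and IRC, and suppose toward a contradiction that $P$ has a cycle $x_1\,P\,x_2,\ \dots,\ x_{k-1}\,P\,x_k,\ x_k\,P\,x_1$.

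First I reduce to a cycle of distinct elements with $k\ge 2$. If some element repeats, the portion of the sequence between two occurrences is itself a shorter cycle, so I pass to a minimal one. A cycle of length one would read $C(\{x_1\})=\{x_1\}$ as ``$x_1\,P\,x_1$''; this is degenerate, and I exclude it by the reading that $P$ compares distinct contracts. Length two is impossible outright, since $C(\{x,y\})$ cannot equal both $\{x\}$ and $\{y\}$. This bookkeeping is the only delicate point.

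Now let $S=\{x_1,\dots,x_k\}$ and read indices cyclically. Suppose some $x_{j+1}\in C(S)$. Apply SUB with $A=\{x_j,x_{j+1}\}\subseteq S$: this gives
\[
x_{j+1}\in C(S)\cap A\subseteq C(\{x_j,x_{j+1}\})=\{x_j\},
\]
which is impossible because the elements are distinct. Hence no element of $S$ is chosen, and $C(S)=\varnothing$.

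Next apply IRC to $A=\{x_1,x_2\}$. We have $C(S)=\varnothing\subseteq A\subseteq S$, so IRC forces $C(A)=C(S)=\varnothing$. But $x_1\,P\,x_2$ means $C(A)=\{x_1\}\neq\varnothing$, a contradiction. So $P$ is acyclic and BA holds.

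\textbf{The converse fails.} I use the correspondence from the second half of Proposition~\ref{prop:irc-irp}: let $X=\{a,b\}$ with
\[
C(\{a\})=\{a\},\qquad C(\{b\})=\varnothing,\qquad C(\{a,b\})=\varnothing.
\]
That proof already verifies that SUB holds and that IRC fails. For BA, the only pair of distinct contracts is $\{a,b\}$, and $C(\{a,b\})=\varnothing$, so $P$ is empty and has no cycle.

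Alternatively, Example~\ref{ex:statusquo} works as well. There $C(\{b,c\})=\varnothing$ again makes $P$ empty, and IRC fails because $\varnothing\subseteq\{b\}\subseteq\{b,c\}$ while $C(\{b\})=\{b\}$.

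Together the two parts show that, on the substitutable domain, BA is strictly weaker than IRC.
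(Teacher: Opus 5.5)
Your proof is correct, but both halves take a different route from the paper's.

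\textbf{The implication.} The paper argues indirectly. Under SUB, IRC is equivalent to path independence (the Aizerman--Malishevski result), and path-independent correspondences are rationalizable by a linear order over sets (Yang's theorem). Then $x\,P\,y$ forces $\{x\}$ to be ranked above $\{y\}$, so a $P$-cycle would be a cycle in a linear order.

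You argue directly. SUB makes every element of a $P$-cycle's support rejected from the whole support, so $C(S)=\varnothing$. IRC then carries this emptiness down to $\{x_1,x_2\}$, contradicting $x_1\,P\,x_2$.

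Your route is self-contained and shows exactly where each axiom enters: IRC is used only to transport an empty choice. It also avoids checking that the cited theorems apply in a framework that allows empty choices. Your explicit exclusion of the degenerate self-loop reading of $P$ is something the paper leaves implicit. The paper's route, in return, links BA to rationalizability: $P$ sits inside the singleton ranking induced by an order over sets.

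\textbf{The converse.} The paper uses a three-contract example in which all choices are non-empty and $P$ is non-trivial ($a$ and $c$ each beat $b$). There IRC fails via $C(\{a,b,c\})=\{a\}\subseteq\{a,c\}$ with $C(\{a,c\})=\{a,c\}$.

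Your two-contract example, and likewise Example~\ref{ex:statusquo}, satisfies BA only vacuously ($P=\varnothing$) and violates IRC only through an empty choice. That is legitimate, since the model explicitly allows $C_i(S)=\varnothing$. The paper's example, however, shows that the separation persists for non-empty-valued correspondences with genuine pairwise comparisons. A non-emptiness requirement would rule out your examples.
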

\begin{proof}
\emph{SUB and IRC imply BA.} Under SUB, IRC is equivalent to path independence
\cite{aizerman1981general}, and every path-independent choice correspondence is
rationalizable by a linear order over sets \cite{yang2020rationalizable}. Such an order
induces a complete, transitive ranking of singletons, whose induced strict pairwise
relation $P$ is acyclic; hence BA holds.

\emph{BA does not imply IRC.} On $X=\{a,b,c\}$ let
\[
C(\{a\})=\{a\},\ C(\{b\})=\{b\},\ C(\{c\})=\{c\},\quad
C(\{a,b\})=\{a\},\]\[ C(\{a,c\})=\{a,c\},\ C(\{b,c\})=\{c\},\ C(\{a,b,c\})=\{a\}.
\]
SUB holds (direct check). The only strict pairwise preferences are $a$ over $b$ and $c$
over $b$, which form no cycle, so BA holds. IRC fails:
$C(\{a,b,c\})=\{a\}\subseteq\{a,c\}\subsetneq\{a,b,c\}$, yet $C(\{a,c\})=\{a,c\}\neq\{a\}$.
\end{proof}

\section{The many-to-one matching problem}

\subsection{Proof of Theorem \ref{thm:mo}}
\begin{proof}

The following two structural facts about firms carry over from the many-to-many analysis, since each firm receives contracts one at a time.

(i) \emph{Monotone firm holdings:} Whenever $H_f$ updates it is set to $C_f(\cdot)$, hence IR by Observation~\ref{obs:subcc}(ii); the added contract $x$ satisfies $x\in C_f(H_f\cup\{x\})$, so $x$ is a challenger to the IR set $H_f$, and IRP gives $H_f\subseteq C_f(H_f\cup\{x\})$. Holdings therefore only grow (Lemma~\ref{lem:invariant}, firm side).

(ii) \emph{Permanent rejection:} If $x$ is rejected by $f$ at holding $H$ ($x\notin C_f(H\cup\{x\})$), then for any later $H'\supseteq H$ we have $H\cup\{x\}\subseteq H'\cup\{x\}$, and Observation~\ref{obs:subcc}(i) gives $x\notin C_f(H'\cup\{x\})$; in particular $x\notin C_f(\nu_f\cup\{x\})$ (Observation~\ref{obs:permrej}, firm side).

\emph{Individual rationality} - Each $\nu_f=H_f$ is a chosen set, so $C_f(\nu_f)=\nu_f$ by Observation~\ref{obs:subcc}(ii). Each matched worker holds a single contract it proposed; since $w$ proposes only along $\succ_w$, whose contracts are IR by construction, that contract is IR, so $C_w(\nu_w)=\nu_w$. Unmatched workers have $\nu_w=\emptyset$. Thus $\nu$ is IR.

\emph{No blocking} - Suppose $x=(f,w)\notin\nu$ MO-blocks $\nu$; by Definition~\ref{def:mo} this means $x\in C_f(\nu_f\cup\{x\})$ (firm-side inclusion) and $C_w(\nu_w\cup\{x\})=\{x\}$ (worker-side replacement). The worker-side equality gives $x\succ_w\nu_w$: if $\nu_w=\{y\}$ then $C_w(\{x,y\})=\{x\}$, so $x\,P_w\,y$ and $\succ_w$ extends $P_w$; if $\nu_w=\varnothing$ then $x$ is IR (Lemma~\ref{lem:acc}) and precedes the outside option. Since $w$ proposes in $\succ_w$-order and $x\succ_w\nu_w$, $w$ proposed $x$ to $f$ before settling for $\nu_w$. Firm acceptance is permanent, so any contract $f$ ever accepts remains in its holding; as $x\notin\nu$, $f$ never accepted $x$, and hence rejected it at the round it was proposed---that is, $x\notin C_f(H\cup\{x\})$ for $f$'s holding $H$ at that round. By permanent rejection (established above), $x\notin C_f(H'\cup\{x\})$ for every later holding $H'\supseteq H$, and in particular $x\notin C_f(\nu_f\cup\{x\})$. This contradicts the firm-side block condition $x\in C_f(\nu_f\cup\{x\})$. Hence no such $x$ exists, and $\nu$ is MO-stable.
\end{proof}

\section{A comparison of IRC and IRP (under SUB)}\label{app:comparison}

We already know that every path-independent choice correspondence is rationalizable by a preference over sets of contracts (\cite{yang2020rationalizable}). However, this is not true for choice correspondences which satisfy SUB and IRP. We provide an example.

\begin{example}
\label{ex:no-set-ranking}
Let $X=\{a,b,c\}$.

\emph{A path-independent correspondence.} Let $C$ be responsive with quota two under
$a\succ b\succ c$:
\[
\begin{aligned}
&C(\{a\})=\{a\},\quad C(\{b\})=\{b\},\quad C(\{c\})=\{c\},\\
&C(\{a,b\})=\{a,b\},\quad C(\{a,c\})=\{a,c\},\quad C(\{b,c\})=\{b,c\},\quad C(\{a,b,c\})=\{a,b\}.
\end{aligned}
\]

This correspondence satisfies SUB and IRC, hence PI, and is rationalized by the ranking over subsets

\[
\{a,b\}\;\succ\;\{a,c\}\;\succ\;\{b,c\}\;\succ\;\{a\}\;\succ\;\{b\}\;\succ\;\{c\}\;\succ\;\varnothing,
\]

in the sense that $C(S)$ is the $\succ$-greatest subset contained in $S$. Deleting a rejected contract never changes the chosen set---$C(\{a,b,c\})=C(\{a,b\})=\{a,b\}$, for instance---which is precisely what allows a single ranking of subsets to reproduce $C$.

\emph{A correspondence satisfying SUB and IRP but not IRC.} Let
\[
\begin{aligned}
&C(\{a\})=\{a\},\quad C(\{b\})=\{b\},\quad C(\{c\})=\{c\},\\
&C(\{a,b\})=\{a,b\},\quad C(\{a,c\})=\{a,c\},\quad C(\{b,c\})=\varnothing,\quad C(\{a,b,c\})=\{a\}.
\end{aligned}
\]
This correspondence satisfies SUB and IRP but violates IRC: $C(\{a,b,c\})=\{a\}\subseteq  \{a,b\}\subsetneq \{a,b,c\},$ yet $ C(\{a,b\})=\{a,b\} \neq \{a\}.$ No ranking over subsets rationalizes it since $C(\{b,c\})=\varnothing$ would require $\varnothing\succsim\{b\}$ as $\{b\}\subseteq\{b,c\}$. However, $C(\{b\})=\{b\}$ requires $\{b\}\succsim\varnothing$. No fixed preference over subsets can select $\varnothing$ from $\{b,c\}$ and $\{b\}$ from $\{b\}$.
\end{example}

\end{document}